\newcommand*{\rom}[1]{\expandafter\@slowromancap\romannumeral #1@}
\begin{document}
%
\title{A Distributed Algorithm for Quality-of-Service Provisioning in Multihop Networks}

\author{\IEEEauthorblockN{Ashok Krishnan K.S. and Vinod Sharma}
\IEEEauthorblockA{Dept. of ECE, Indian Institute of Science, Bangalore, India\\
Email: \{ashok, vinod\}@ece.iisc.ernet.in
}}


%


\maketitle

\begin{abstract}
We present a distributed algorithm for joint power control, routing and scheduling in multihop wireless networks. The algorithm also provides for Quality of Service (QoS) guarantees, namely, end-to-end mean delay guarantees and hard deadline guarantees, for different users. It is easily implementable and  works by giving local dynamic priority to flows requiring QoS, the priority being a function of the queue length at the nodes. We provide theoretical bounds for the stability properties of the algorithm. We also compare the performance of the algorithm with other existing algorithms by means of extensive simulations, and demonstrate its efficacy in providing QoS on demand.
\end{abstract}

%
\IEEEpeerreviewmaketitle

\section{Introduction and Literature Review}
A wireless network consists of a number of nodes connected by time-varying channels, with stochastic arrival of data at various nodes, destined to other nodes. The network control problem consists of making decisions about the power control, routing and scheduling to be done at the different nodes. Wireless network control may be accomplished in a centralized or distributed manner. Centralized control is often difficult to implement, is memory intensive, time consuming and prone to failure (when the central controlling node fails). Distributed algorithms offer ease of implementation and more robustness to failure as opposed to centralized algorithms, though their performance may not be as good, given that distributed control uses only a subset of the total information about the system, as well as due to the fact that the action space is limited. In many real systems, implementing centralized control may not be feasible, given memory and time constraints. Developing distributed algorithms that can match the performance of centralized algorithms has always been a challenging problem\cite{Liu}.\\
\indent The notion of a \emph{throughput optimal} scheduling scheme was introduced in \cite{Tas92} for multihop wireless networks. A policy is \emph{throughput optimal} if it can stabilize an arrival process which can be stabilized by any policy. The \emph{capacity region} of the network  is defined to be the set of all mean arrival rate vectors to the network that can be stabilized by some policy. The notion of throughput-optimality was extended to include power control as well in \cite{Nee05}, where it was shown that the power allocation and scheduling that maximizes the sum of the rate-differential backlog product  is throughput optimal. It also provided a distributed algorithm version of this scheme, which, however, is not throughput optimal. In \cite{Sri50} the authors propose a distributed scheme that is guaranteed to achieve at least one-third of the capacity region, by generating a maximal matching between the nodes. However it does not study the most general SINR interference model; instead,  graph based interference models are used. In such models, one considers the interference graph of the network: links which interfere with each other cannot simultaneously transmit, whereas those which do not interfere can. A similar model is studied in \cite{Zuss}.  A scheme which maximizes the expected value of the  rate-differential backlog metric was proposed in \cite{Kim07}, which uses the SINR interference model.\\
\indent In \cite{Lee09} a distributed scheme for joint power control, scheduling and routing is proposed for wireless networks, that guarantees the attainment of a $\rho$ fraction of the capacity region $\Lambda$, under the SINR model. This is an extension of the scheduling policy described in\cite{Tas98}. In the distributed scheme in \cite{Tas98} one picks an activation scheme (or schedule) randomly, such that there is a nonzero probability of picking the optimal scheme; compares this choice with the previous slot's scheme in terms of its metric performance, and chooses the activation scheme which is better. It can be shown that this activation scheme converges to the optimal schedule almost surely. Gossip algorithms have been used in  \cite{Lee09} to calculate, in a distributed manner, the global metric required for making a scheduling decision. An extensive survey of various techniques and results in Gossip algorithms is available in \cite{ShaBook}.\\
\indent Quality-of-Service (QoS) requirements may encompass a wide range of requirements, such as stability, mean delay guarantees, delay deadlines, rate/ bandwidth guarantees and so on. The kind of QoS that a flow demands depends on the application that generates the flow. There have been a number of approaches to consider these QoS requirements separately as well as in combination. In the case of average delay constraints, one approach is to use the notions of \emph{effective bandwidth} and \emph{effective capacity} from large deviations theory, which lets us translate delay or queue length bounds into equivalent rate constraints, and solve that problem in the physical layer \cite{LauSurvey}. However, such schemes are accurate approximations to the actual requirments only when the queue lengths are large. Also, this method is not easy to apply in the multihop context, since the coupling between queues is complex, and translation of delay constraints to control actions is generally quite complex.\\
\indent In \cite{Chen99}, each node continuously keeps track of the minimum end-to-end delay, bandwidth and cost from that node to every other destination node. Given the QoS requirements for a flow, multiple paths are probed, from source to destination, and a feasible path is chosen using a scheme of forwarded \lq tickets\rq, which will collect the delay information along feasible paths. In \cite{BocheStan}, a one-to-one relationship is assumed between the given QoS constraint and the SINR. Thus, one can convert QoS constraints to SINR constraints. Under the additional assumption that the function mapping the feasible QoS set to the corresponding SINR values is log-convex, one can show that the feasible QoS region is a convex set. However, this additional assumption may not always hold. In \cite{Satya}, the authors study the problem of minimizing power while ensuring QoS in a network.\\
\indent Another traditional approach is to use Markov Decision Processes\cite{LauSurvey}. In multihop scenarios, however, such an approach is generally not tractable owing to the huge dimension of the state space. Lyapunov optimization based approaches\cite{GeorgBook} are generally considered more suitable in the multihop setting.\\
Our algorithm tries to provide QoS for flows that demand a mean delay guarantee, or a hard deadline, in a distributed fashion in a multihop network. Since we also ensure stability of all the queues, it does ensure that if a flow is coming at a certain rate, it gets that as its service rate along the way to the destination. These three QoS ensure that the main applications in the network: file transfers, real time applications-VoIP, teleconferencing and video streaming-will be satisfactorily served. Owing to the time-varying nature of the channels, we are considering a system where the paths are not fixed, which is a practical consideration. In such a scenario, one cannot decompose the QoS requirement to the level of paths and links. At this level of distributed decision making, providing QoS can be a challenging problem. Our algorithm attempts to combine simplified distributed decision making with the QoS problem.\\
\indent The rest of this paper is organized as follows: Section \rom{2} provides the system model and Section \rom{3} describes the distributed algorithm. In Section \rom{4} we obtain some theoretical results about the performance of the algorithm, and in Section \rom{5} we report some simulation results. 

\section{System Model}
\begin{figure}
	\centering
	\setlength{\unitlength}{1cm}
	\thicklines
	\begin{tikzpicture}[scale=0.8, transform shape]		
	\node[draw,shape=circle, fill=blue!30, scale=0.6, transform shape] (v1) at (4.7,0.5) {$node\ k$};
	\node[draw,shape=circle, fill=blue!30, scale=0.6, transform shape] (v2) at (2.4,1) {$node\ j$};
	\node[draw,shape=circle, fill=blue!30, scale=0.6, transform shape] (v3) at (3,3.5) {$node\ m$};
	\node[draw,shape=circle, fill=blue!30, scale=0.6, transform shape] (v4) at (0.3,5) {$node\ n$};
	\node[draw,shape=circle, fill=blue!30, scale=0.6, transform shape] (v5) at (5.1,5.1) {$node\ i$};
	\node[draw,shape=circle, fill=blue!30, scale=0.6, transform shape] (v6) at (7,2.2) {$node\ p$};
	\node[draw,shape=circle, fill=blue!30, scale=0.6, transform shape] (v10) at (0,2.5) {$node\ l$};
	\node (v7) at (7.6,5.25) {$A_i^j(t)$};
	\node (v8) at (6.5,5.8) {$q_i^j(t)$};
	\node (v9) at (3,5.4) {$\gamma_{ij}(t)$};			
	\draw[line width=1.0mm] (v2) -- (v1)
	(v4) -- (v5)			
	(v3) -- (v5)
	(v1) -- (v5)
	(v4) -- (v3)
	(v3) -- (v5)
	(v3) -- (v2)
	(v5) -- (v6)
	(v4) -- (v2)
	(v4) -- (v10)
	(v10) -- (v2)
	(v1) -- (v6);
	\draw[line width=0.5mm, line cap=round](5.8,5)--(6.7,5);
	\draw[line width=0.5mm, line cap=round](5.8,5.5)--(6.7,5.5);
	\draw[line width=0.5mm, line cap=round](5.8,5)--(5.8,5.5);
	\draw[line width=0.2mm](5.9,5)--(5.9,5.5);
	\draw[line width=0.2mm](6.0,5)--(6.0,5.5);
	\draw[line width=0.2mm](6.1,5)--(6.1,5.5);
	\draw[line width=0.2mm](6.2,5)--(6.2,5.5);
	\draw[line width=0.2mm](6.3,5)--(6.3,5.5);
	\draw[thick,->] (7.1,5.25) -- (6.6,5.25);
	\end{tikzpicture}
	\caption{A simplified depiction of a Wireless Network}
	\label{fig1}
\end{figure}
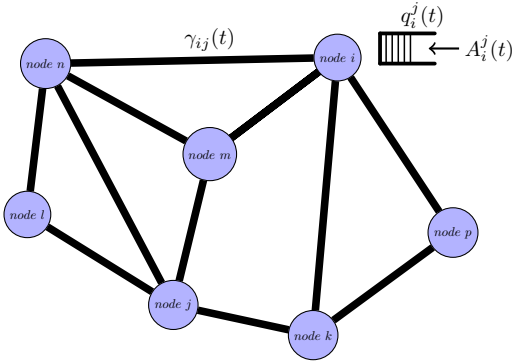

We consider a multihop network (Fig. 1), given by a graph $G=(V,E)$ where $V=\{1,2,..,N\}$ is the set of $N$ vertices and $E$, the set of links on $V$. We assume a slotted system, with the discrete time index denoted by $t\in \{0,1,2,...\}$.\\
\indent Each link $i-j$ has a time varying channel gain $\gamma_{ij}(t)$ at time $t$. The channel gain vector is represented by $\gamma(t)=[\gamma_{ij}(t)]_{1\leq i,j\leq N}$. At each node $i$, $A_i^c(t)$ denotes the i.i.d process of exogeneous arrival of packets destined to node $c$. The mean arrival rate is $\lambda_i^c=\mathbb{E}[A_i^c(t)]$, and the mean arrival rate vector is $\lambda = [\lambda_i^c]_{1 \leq i,c \leq N}$.  Any node $i$ for which there exists a node $c\neq i$ such that $\lambda_i^c(t) >0$ is called a source node, and  $c$ is the corresponding destination node. All traffic in the network with the same destination $c$ is called  \emph{flow} $c$; the set of all flows is denoted by $F$. The subset of flows which have QoS constraints is denoted by $F_Q$; each flow $c\in F_Q$ has an associated QoS criterion $\Theta(c)$. The criterion is a condition on some parameter of the flow, such as its delay. At each node there are queues, with $q_i^c(t)$ denoting the queue length at node $i$ storing packets corresponding to flow $c \in F$. \\
\indent By $p_{ij}(t)$ we denote the power at which node $i$ transmits to node $j$, at time $t$. The set of all transmit powers is denoted by the power vector, $p(t)=[p_{ij}(t)]_{1\leq i,j\leq N} $. The power vector takes its values from a feasible set $\mathcal{P}$, which is bounded.  The rate of transmission between node $i$ and node $j$ is $r_{ij}(t)=f(p(t),\gamma(t))$ where $f$ is some achievable rate function. We will be using the SINR rate function,
\begin{equation}
	r_{ij}(t)=\log_2\biggl(1+\dfrac{p_{ij}(t)\gamma_{ij}(t)}{\mathcal{N}_j(t)+\sum_{k\neq i}\sum_{l \in V} p_{kl}(t)\gamma_{kl}(t)}\biggr),
\end{equation}
\noindent with $\mathcal{N}_j(t)$ denoting the noise power at node $j$. This rate may be allocated to packets in one or more of the queues in node $i$, to be transferred to the corresponding queue in node $j$.\\
\indent The queue corresponding to flow $c(\neq i)$ at  node $i$ evolves as
\begin{equation}
	q_i^c(t)=q_i^c(t)-\mu_{OUT,i}^c(t)+\mu_{IN,i}^c+A_i^c(t),
\end{equation}
\noindent where $\mu_{OUT,i}^c(t)$ is the number of packets of flow $c$ that are routed out of node $i$ in slot $t$, and $\mu_{IN,i}^c(t)$ is the number of packets that are routed in. They obey the constraints
\begin{align*}
	\sum_c \mu_{OUT,i}^c(t)\leq \sum_j r_{ij}(t).
\end{align*}
We assume that, at a time, a node can be either a transmitter or a receiver, but not both, and it transmits to or receives from at most one node. Let us denote
\begin{align*}
	\Delta_{ij} &=\max_c (q_i^c-q_j^c)^+,
\end{align*}
where $x^+=\max(x,0)$. Then, we define the maxweight policy as the policy that solves, at every time $t$,
\begin{align*}
	p^*(t)=\arg_{p\in\mathcal{P}}\max\sum_{ij}\Delta_{ij}r_{ij}(p),
\end{align*}
and schedules across each link $ij$ the flow $c$ for which $(q_i^c-q_j^c)^+=\Delta_{ij}$, and the corresponding link rate is $r_{ij}(p^*(t))$. We say that a queue $q_i^c(t)$ is stable under a policy if
\begin{align*}
	\lim_{T\to\infty}\sup\frac{1}{T}\sum_{\tau=0}^{T-1}\mathbb{E}[q_i^c(\tau)] &<\infty.
\end{align*}
The network is stable if all queues are stable.  The capacity region $\Lambda$ of the network is defined to be the convex hull of the set of all arrival rate vectors $\lambda$ that are stabilized by some policy\cite{Nee05}.
\section{A Distributed Scheme Providing QoS}
We propose a distributed algorithm (Algorithm 1) for joint scheduling, routing and power control, while also making provision for mean delay guarantees and hard deadline guarantees. This is an extension of the algorithm in \cite{Lee09}. However, it differs substantially from this algorithm on two counts: first, that it uses queue length information in the scheduling process, and second, that it makes provision for QoS as well. The use of queue length information is based on the intuitive idea of giving those nodes that have a higher queue length, a higher probability of becoming a transmitter. This should lead to improvement in performance. In this scheme, those links which have a high queue length at the transmitting side, and a low queue length at the receiving side, have a higher probability of being formed. This is a heuristic approach to backpressure.\\
\indent At the beginning of each slot, each node $i$ computes $q_i=\sum_{c\in F}h^c(q_i^c)$, where $h^c(x)=\theta x^2 \eta^c+x(1-\eta^c)$ if $c\in F_Q$ and $h^c(x)=x$ if $c\in F\setminus F_Q$, with $\theta>1$. Here, $\eta^c$ is one if the QoS constraint for flow $c$ was met in the previous time slot (i.e $\Theta(c)$ was satifsied), and is zero otherwise. This $q_i$ is a virtual queue length in the node, with extra weight being given to the backlogs of those flows whose QoS requirements were not met, thus capturing the dynamic priority given to flows requiring QoS. The nodes now use Algorithm 2, given below, to compute, in a distributed manner, $U^*$, which is a surrogate for $U=\sum_i u_i$, where $u_i=\min(q_i,B)$, with $B$ chosen to be a very large number. Node $i$ decides to be a transmitter with probability $\frac{u_i}{U^*}$; else, it becomes a receiver. As a result, nodes with a higher backlog of QoS packets have a higher probability of being a transmitter, and hence, pushing the packets out of itself. The queues with lesser backlog have a higher chance of being receivers. The algorithm thus dynamically moves packets from bigger queues to smaller queues.\\
\indent Each transmitter tries to randomly pair up with one of its neighbours, and establishes a link if the neighbour chosen was neither a transmitter nor paired with any other node. Each transmitter also picks a random power level for transmission. Over each link thus formed, we schedule the flow that maximizes  $(h^c(q_i^c)-h^c(q_j^c))^+$ if $\chi=1$. Else, we choose the flow that maximizes $(q_i^c-q_j^c)^+$. During the slots where $\chi=1$, this will prioritize flows to provide QoS. In other slots, this is needed for stability of the non-QoS flows. The variable $\chi$ captures the trade-off between stability of the system and QoS for some users.\\
\indent  We then compute the rate-differential backlog product over each link $ij$. Let $R_{ij}(t)=[r_{ij}(\tilde p(t))-(1-\alpha_2)r_{ij}(p(t-1))]$. The rate-backlog product, $M_{ij}$, is given by $R_{ij}(t)\Delta_{ij}$ if $\chi=0$, and $R_{ij}(t)(h^{c^*_{ij}}(q_i^{c^*_{ij}})-h^{c^*_{ij}}(q_i^{c^*_{ij}}))^+$ otherwise. We obtain $\tilde M$, an esimate of $\sum M_{ij}$ using Algorithm 2. If $\tilde M\geq 0$, we use the power $\tilde p_i$ at node $i$; else we use the power used in the previous slot, as well as the corresponding scheduling. To ensure that each node has knowledge of the rate at which it can transmit, all nodes are required to send out signals of $\nu \tilde p_i(t)$ and $\nu p_i(t-1)$ ($\nu$ being  sufficiently small) at the same time; as a result, each node may sense the power it receives, subtract the effect of its own power, and obtain its interference level without coming to know the entire channel state. This technique was used in \cite{Nee05}.\\
\indent The gossip algorithm we use works on the following principle: Say we have $K$ independent random variables distributed exponentially with parameters $y_1,y_2,..y_K$. Then the minimum of these random variables is an exponential random variable with parameter $y_1+y_2+..+y_K$. Hence, in order to compute the sum of $K$ values, exponential random variables with these values as parameters, and compute their minimum. The inverse of this random variable is an estimate for their sum. One may generate a number of such random variables and compute the corresponding inverse of their average, for more accuracy.\\
\indent The algorithm dynamically gives priority to the queues, depending on whether their QoS constraints have been met. The flows which fail to meet the QoS criterion are given higher weightage in the system, by means of the function $h$. The overall distributed algorithm is given as Algorithm 1, below.
\vspace*{-5mm}
\begin{algorithm}[H]
	\caption{Distributed Algorithm with provision for QoS}
	\label{DistriAlgo}
	\begin{algorithmic}[1]
		
		\If{$t=0$} $\eta^c \gets 0\  \forall c \in F$
		
		\EndIf
		
		\While{$t\geq 0$}

		\State Generate $\chi$, with $\mathbb{P}\{\chi=1\}=\sigma=1-\mathbb{P}\{\chi=0\}$. Communicate its value to all nodes by signaling.	
			

		
		\State At each node $i$	:

		\State Compute $q_i = \sum_{c\in F} h^c(q_i^c)$.

		\State  Generate  $\{X_i^j\}_{j=1}^L$, i.i.d exponential with parameter $u_i =\min(q_i,B)$. 
				
		\State By gossiping (Algorithm 2) estimate $X_{min}^j=\min_i \{X_i^j\}_{j=1}^L$.		
		
		\State Calculate  $U^*=\biggl(\dfrac{1}{L}\sum_{j=1}^L X_{min}^j\biggr)^{-1}$. 
		
		\State Generate $\phi\sim\mathcal{U}[0,1]$. 
		\If{$\phi<\dfrac{u_i}{U^*}$}$\ i \gets \textit{transmitter}$
		\Else $\ i \gets \textit{receiver}$ 
		\EndIf
		
		\State Each transmitter $i$ picks a power $p_i\sim\mathcal{U}[0,p_{max}]$. Pick a neighbour uniformly randomly and send a request to pair (RTP).
				
		\State Each receiver $j$ waits for an RTP, pairs up with the first transmitter that sends it an RTP.				
				
		\State Over any link $(i,j)$ formed, schedule $c^*_{ij}=\arg_{c\in F}\max \chi(h^c(q_i^c)-h^c(q_j^c))^++(1-\chi)(q_i^c-q_j^c)^+$.		
		
		\State Each paired transmitter $i$ beams $\nu \tilde p^i$ and $\nu p^i(t-1)$.
		
		\If{$\chi=0$} $\ M_{ij}\gets \Delta_{ij}R_{ij}(t)$ 	
		\Else $\ M_{ij}\gets (h^{c^*_{ij}}(q_i^{c^*_{ij}})-h^{c^*_{ij}}(q_i^{c^*_{ij}}))^+R_{ij}(t)$
		\EndIf	
		
		\State  Generate $\{Y_i^j\}_{j=1}^L$, i.i.d exponential with parameter $M_{ij}$.
		
		\State By gossiping (Algorithm 2) estimate $Y_{min}^j=\min_i \{Y_i^j\}_{j=1}^L$.		
		
		 \State Calculate  $\tilde M=\biggl(\dfrac{1}{L}\sum_{j=1}^L Y_{min}^j\biggr)^{-1}$. 		
		
		\State If $\tilde M\geq0$, use the power and scheduling generated in the current slot. Else, use the power allocation and scheduling from the previous slot.
		
		\State For each flow $c$:
		\If{QoS criterion $\Theta(c)$ was satisfied} $\eta^c\gets 0$
		\Else $\ \eta^c\gets 1$
		\EndIf
		\State Update this information in the network using gossiping.
		
		\EndWhile
		
	\end{algorithmic}
	
\end{algorithm}
\vspace*{-6mm}

\begin{algorithm}[h]
	\begin{algorithmic}[1]
		\State Each node $i$ has $L$ numbers $Z_i^1,\dots Z_i^L$ with parameter $z_i$.
		\While{$k=0,1,..,T$} at each node
		\State Choose a neighbour with probability $1/N$. Call it $j$. 
		\State $Z_i^l,Z_j^l\gets \min(Z_i^l,Z_j^l)$ for each $l=1,\dots, L$.
		\EndWhile
		
	\end{algorithmic}
	\caption{Gossip Algorithm}
	\label{GossiAlgo}
\end{algorithm}
\vspace*{-6mm}
\section{Performance Analysis}

Let $r_{ij}(p)$ denote the rate across link $ij$ under power allocation $p$. Denote the optimal rates in slot $t$ by $r_{ij}^*(p^*(t))$. Then we have the following Lemma, which is a version of Theorem 1 in \cite{Lee09}:
\vspace*{-5mm}
\newtheorem{lem}{Lemma}
\begin{lem}
	Let an algorithm have power allocation $p(t)$ and let the rate under its scheduling in time $t$ be $r_{ij}(p(t))$, for every link $ij\in E$. If there exist $\alpha_1,\alpha_2,\beta_1,\beta_2\in(0,1)$ such that for all $t$,
	\begin{align*}
		\mathbb{P}[\sum_{ij\in E}\Delta_{ij}r_{ij}(p(t)) &\geq(1-\alpha_1)\sum_{ij\in E}\Delta_{ij}r_{ij}^*(p^*(t))]\geq\beta_1,\\
		\mathbb{P}[\sum_{ij\in E}\Delta_{ij}r_{ij}(p(t)) &\geq(1-\alpha_2)\sum_{ij\in E}\Delta_{ij}r_{ij}(p(t-1))]\\
		&\geq1-\beta_2.
	\end{align*}
	
	Then, the algorithm will stabilize the network for any arrival rate vector $\lambda \in \rho\Lambda$ where $\rho<1-(\alpha_1+(1-\alpha_1)\alpha_2)-2\sqrt{\dfrac{\beta_2}{\beta_1}}$.
\end{lem}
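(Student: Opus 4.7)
The plan is to run a Foster--Lyapunov drift analysis with the quadratic Lyapunov function $V(q)=\sum_{i,c}(q_i^c)^2$. Using the queue recursion, the algorithm's rule of scheduling on every formed link the flow maximizing the differential $q_i^c-q_j^c$, and boundedness of per-slot arrivals and link rates, the one-slot conditional drift admits the standard estimate
\begin{equation*}
\mathbb{E}[V(q(t+1))-V(q(t))\mid q(t)] \leq K + 2\langle\lambda, q(t)\rangle - 2\,\mathbb{E}[S(t)\mid q(t)],
\end{equation*}
where $S(t)=\sum_{ij}\Delta_{ij}(t)\,r_{ij}(p(t))$ and $K$ absorbs all second-moment contributions. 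Writing $S^*(t)=\sum_{ij}\Delta_{ij}(t)\,r^*_{ij}(p^*(t))$, the standard comparison of the maxweight value with a stationary randomized policy whose per-flow per-link rates $\{\nu^c_{ij}\}$ satisfy flow conservation with arrival rate $\lambda/\rho\in\Lambda$ yields $S^*(t)\geq\rho^{-1}\langle\lambda,q(t)\rangle$, so it remains to lower bound $\mathbb{E}[S(t)\mid q(t)]$ in terms of $S^*(t)$.

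This last bound is the main obstacle, since neither hypothesis alone controls $S(t)$ in expectation relative to $S^*(t)$. Let $G(s)=\{S(s)\geq(1-\alpha_1)S^*(s)\}$ and $B(s)=\{S(s)\geq(1-\alpha_2)S(s-1)\}$, so $\mathbb{P}(G(s))\geq\beta_1$ and $\mathbb{P}(B(s)^c)\leq\beta_2$. The idea is to trace backwards in time: on the event $G(t-k)\cap B(t-k+1)\cap\cdots\cap B(t)$, iterating both inequalities gives
\begin{equation*}
S(t)\,\geq\,(1-\alpha_2)^k(1-\alpha_1)\,S^*(t-k).
\end{equation*}
Boundedness of per-slot queue increments forces $|S^*(t)-S^*(t-k)|=O(k)$ with a constant independent of $q$, so the right-hand side can be re-expressed in terms of $S^*(t)$ up to an additive error $O(k)$. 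Partitioning the sample space by the age $k$ of the most recent $G$-event and applying a union bound to the chain-failure probability, then balancing the shrinking geometric factor $(1-\alpha_2)^k$ against the accumulated failure probability, produces after optimisation the characteristic $2\sqrt{\beta_2/\beta_1}$ penalty and yields
\begin{equation*}
\mathbb{E}[S(t)\mid q(t)]\,\geq\,\bigl[(1-\alpha_1)(1-\alpha_2)-2\sqrt{\beta_2/\beta_1}\bigr]\,S^*(t) - K'.
\end{equation*}

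Substituting both lower bounds into the drift inequality gives
\begin{equation*}
\mathbb{E}[V(q(t+1))-V(q(t))\mid q(t)] \leq K'' - 2\Bigl(\frac{(1-\alpha_1)(1-\alpha_2)-2\sqrt{\beta_2/\beta_1}}{\rho}-1\Bigr)\langle\lambda,q(t)\rangle,
\end{equation*}
which is strictly negative outside a bounded set in $q$ precisely when $\rho<(1-\alpha_1)(1-\alpha_2)-2\sqrt{\beta_2/\beta_1}=1-(\alpha_1+(1-\alpha_1)\alpha_2)-2\sqrt{\beta_2/\beta_1}$. Telescoping the drift inequality and averaging in time (equivalently, invoking Foster--Lyapunov) then produces the finite time-averaged expected queue length required by the definition of stability. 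The genuinely difficult step is the chaining estimate above: the two hypotheses only anchor $S(t)$ to a moving target ($S^*(t)$ or $S(t-1)$), so one must propagate a single near-optimal slot forward through a random number of ``$B$-good'' slots while controlling both the geometric accuracy loss $(1-\alpha_2)^k$ and the probability that no $G$-event has occurred in the recent past, and it is precisely this balancing that produces the square-root factor in the final threshold.
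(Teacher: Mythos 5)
The paper offers no proof of its own here --- the ``proof'' is the single line ``See \cite{Lee09}'' --- so your attempt can only be measured against the standard pick-and-compare argument that citation points to. Your architecture is the right one: quadratic Lyapunov drift, the bound $S^*(t)\geq\rho^{-1}\langle\lambda,q(t)\rangle$ via a stationary randomized policy for $\lambda/\rho\in\Lambda$, and a chaining argument that propagates a near-optimal slot through comparison-successful slots. You also correctly isolate the crux. But the crux is exactly where the argument, as written, does not close, and the gap is substantive rather than a routine omitted computation.

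Two concrete problems. First, your chained bound carries the factor $(1-\alpha_2)^k$, where $k$ is the age of the most recent $G$-event. Under the hypotheses $k$ is (roughly) geometric with parameter $\beta_1$, so averaging $(1-\alpha_2)^k$ over $k$ yields approximately $\beta_1/(\alpha_2+\beta_1)$, which is far smaller than the single factor $(1-\alpha_2)$ appearing in your claimed inequality $\mathbb{E}[S(t)\mid q(t)]\geq\bigl[(1-\alpha_1)(1-\alpha_2)-2\sqrt{\beta_2/\beta_1}\bigr]S^*(t)-K'$ whenever $\alpha_2\gg\beta_1$. The ``balancing'' you invoke cannot repair this: the $2\sqrt{\beta_2/\beta_1}$ term comes from minimizing $\tfrac{1}{\beta_1 T}+\beta_2 T$ over a window length $T$ (the probability of seeing no $G$-event in $T$ slots, bounded by $(1-\beta_1)^T\leq\tfrac{1}{\beta_1 T}$, against the union bound $\beta_2 T$ on comparison failures, optimized at $T=1/\sqrt{\beta_1\beta_2}$), and it is independent of the geometric decay $(1-\alpha_2)^k$, which is left uncontrolled in your account. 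Second, you condition on $q(t)$ and then reason backward about $G(t-k)\cap B(t-k+1)\cap\cdots\cap B(t)$; the hypotheses give $\mathbb{P}(G(s))\geq\beta_1$ and $\mathbb{P}(B(s)^c)\leq\beta_2$ unconditionally, not conditionally on the future queue state $q(t)$, so the probabilities you assign to the partition by $k$ are not justified. The standard repair for both issues is a $T$-step forward drift $\mathbb{E}[V(q(t+T))-V(q(t))\mid q(t)]$ over a deterministic window $T\approx 1/\sqrt{\beta_1\beta_2}$, together with a more careful accounting of how the $(1-\alpha_2)$ loss enters only once per counted slot rather than compounding over the whole window; without that, the stated threshold $\rho<(1-\alpha_1)(1-\alpha_2)-2\sqrt{\beta_2/\beta_1}$ is not reached.
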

\begin{proof} See \cite{Lee09}.
\end{proof}
While $\rho$ may be a small number, the utility of this result lies in the fact that we can provide a stability result under very general rate models, including the SINR model, which is in general quite difficult to analyze. To show that our algorithm satisfies this theorem, we will need another result from \cite{ShaBook}:
\begin{lem} 
	\label{Lemmatwo}
	Let $\epsilon,\delta\in(0,\frac{1}{2})$. Let $L=3\delta^{-2}\log(4\epsilon^{-1})$. Assuming the gossiping matrix is complete, the gossiping algorithm computes an estimate $\tilde S$ of the sum $S$, with $\tilde S\in[(1-\delta)S,(1+\delta)S]$ for all nodes with probability greater than or equal to $1-\epsilon$ in time $T=O(\delta^{-2}\log N\epsilon^{-1}\delta^{-1})$.
\end{lem}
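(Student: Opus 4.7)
The plan is to separate the two sources of error in the estimator: (i) the statistical error from having only $L$ samples of the global minimum, and (ii) the gossip error from only approximately diffusing each local minimum to every node. I will bound each separately by $\epsilon/2$ and combine them with a union bound. The key probabilistic fact is that if $X_i^j\sim\mathrm{Exp}(z_i)$ are independent across $i$, then $X_{\min}^j=\min_i X_i^j\sim\mathrm{Exp}(S)$ with $S=\sum_i z_i$, so $\mathbb{E}[X_{\min}^j]=1/S$ and $1/\bar{X}$, where $\bar{X}=\frac{1}{L}\sum_{j=1}^L X_{\min}^j$, is a natural estimator for $S$.

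For the statistical step, I would note that $L\bar{X}$ is Gamma distributed with shape $L$ and rate $S$. Applying the Chernoff bound to this Gamma variable (equivalently, to the sum of $L$ i.i.d.\ exponentials) gives
\begin{align*}
\mathbb{P}\Bigl[\,\bigl|\bar X - 1/S\bigr|>\tfrac{\delta}{S}\,\Bigr]\le 2\exp\bigl(-L\delta^2/3\bigr),
\end{align*}
for $\delta\in(0,1/2)$. Choosing $L=3\delta^{-2}\log(4\epsilon^{-1})$ makes the right-hand side at most $\epsilon/2$. On the complementary event $\bar X\in[(1-\delta)/S,(1+\delta)/S]$, the reciprocal $\tilde S=1/\bar X$ lies in $[S/(1+\delta),S/(1-\delta)]\subseteq[(1-\delta)S,(1+\delta')S]$, up to a harmless constant inflation of $\delta$ that can be absorbed by rescaling at the start of the argument.

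For the gossip step, fix one index $j\in\{1,\dots,L\}$ and consider the pairwise-minimum update rule on a complete gossip matrix. A standard spectral-gap analysis (as in \cite{ShaBook}) shows that the componentwise minimum held at each node converges to the true global minimum $X_{\min}^j$, and that after $T=O\!\left(\log(N/\epsilon')\right)$ averaging rounds, every node's local value equals the true minimum with probability at least $1-\epsilon'$. Setting $\epsilon'=\epsilon/(2LN)$ and taking a union bound over the $N$ nodes and the $L$ independent gossip instances yields probability at least $1-\epsilon/2$ that every node holds the exact minimum for every $j$. Combining this with the statistical event via a second union bound gives the claimed $1-\epsilon$ confidence. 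Substituting the chosen $\epsilon'$ into $T$ and expanding $L$ yields $T=O(\delta^{-2}\log(N\epsilon^{-1}\delta^{-1}))$.

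The main obstacle is the gossip-convergence step: showing that the distributed minimum propagates in logarithmic time on a complete graph with the pairwise-exchange rule, including carefully handling the dependence of convergence on the graph's spectral gap and the fact that convergence is \emph{in distribution/probability} rather than deterministic in a finite number of rounds. The statistical Chernoff step and the final union-bound accounting are comparatively routine once the two tolerances are matched. For the full derivation of the gossip portion we rely on the analysis in \cite{ShaBook}.
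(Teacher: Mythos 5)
The paper does not actually prove this lemma: its ``proof'' is the single line ``See \cite{ShaBook}.'' Your sketch is therefore doing more work than the paper itself, and it correctly reconstructs the standard argument from that reference (Mosk-Aoyama and Shah's separable-function gossip): the exponential-minimum identity, a Chernoff/Gamma concentration bound on the sample mean of the $L$ minima giving the $L=3\delta^{-2}\log(4\epsilon^{-1})$ choice, the reciprocal map with a constant inflation of $\delta$, and a union bound against the gossip-failure event. Two small points of looseness. First, the propagation of a minimum under pairwise exchange is a monotone rumor-spreading process (once a node holds the global minimum it never loses it), so the relevant quantity in \cite{ShaBook} is the conductance of the gossip matrix rather than a spectral gap per se; on the complete gossip matrix these coincide up to constants, so your $O(\log(N/\epsilon'))$ spreading time is right, but the mechanism is dissemination, not averaging. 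Second, your final accounting of $T$ is slightly off: substituting $\epsilon'=\epsilon/(2LN)$ into $O(\log(N/\epsilon'))$ gives only $O(\log(N\epsilon^{-1}\delta^{-1}))$ rounds; the extra multiplicative $\delta^{-2}$ in the stated bound $T=O(\delta^{-2}\log(N\epsilon^{-1}\delta^{-1}))$ comes from charging for the transmission of all $L=O(\delta^{-2}\log\epsilon^{-1})$ values per contact, not from ``expanding $L$'' inside the logarithm. Neither issue is a gap in the mathematics so much as in the bookkeeping, and both are resolved in the cited reference on which you (and the paper) ultimately rely for the gossip-convergence step.
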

\begin{proof} See \cite{ShaBook}.
\end{proof}
\vspace*{-3.5mm}
\begin{lem}
	\label{Lemmathree}
	Let $\alpha_1\in(0,1)$. Then, Algorithm 1 produces rates $r_{ij}(p(t))$, which satisfy, at every time $t$,
	\begin{align*}
		\mathbb{P}[\sum_{ij\in E}\Delta_{ij}r_{ij}(p(t)) &\geq(1-\alpha_1)\sum_{ij\in E}\Delta_{ij}r_{ij}^*(p^*(t))]\geq\beta_1
	\end{align*}
	where $\beta_1=(1-\beta_3){\bigl(\frac{\epsilon}{2(1-\alpha_3^2)N^{3.5}{B^2}}\bigr)}^N$, with $\beta_3\in(0,1)$, $\alpha_3\in\bigl(0,\frac{1}{2NB}\bigr)$ and $\epsilon>0$.
\end{lem}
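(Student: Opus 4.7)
The plan is to decompose Algorithm 1's randomization into four essentially independent sub-events, lower bound each, and multiply to obtain $\beta_1$. Fix a slot $t$ and let $T^*$ denote the set of transmitters under the optimal schedule $p^*(t)$, together with its intended receivers and optimal power assignments $\{p^*_{ij}\}$. The target event is that the realized weighted rate $\sum_{ij}\Delta_{ij}r_{ij}(p(t))$ is within a factor $(1-\alpha_1)$ of $\sum_{ij}\Delta_{ij}r^*_{ij}(p^*(t))$.

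First I would invoke Lemma \ref{Lemmatwo}, tuning its parameters so that the gossip subroutine returns $U^*\in[(1-\delta)U,(1+\delta)U]$ with probability at least $1-\beta_3$; this is the source of the $(1-\beta_3)$ factor in $\beta_1$. Conditioned on this event, the role-selection probabilities $u_i/U^*$ are close to $u_i/U$, and in particular $u_i/U^*\geq 1/((1+\delta)NB)$ for any node with $u_i\geq 1$, while $1-u_i/U^*$ stays bounded away from zero. Second, I would lower bound the probability that every node takes the role (transmitter/receiver) dictated by $p^*$, using independence of the coin flips across nodes and $u_i\leq B$; the resulting product $\prod_{i\in T^*}(u_i/U^*)\prod_{i\notin T^*}(1-u_i/U^*)$ contributes the $(NB)^{-|T^*|}$-type term, which, after using $|T^*|\leq N$, gives part of the $N^{3.5}B^2$ denominator raised to the $N$.

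Third, given the correct transmitter set, I would bound below by $(1/N)^{|T^*|}$ the probability that every transmitter's uniform-random neighbor selection lands on its intended receiver, noting that under this event no conflicting RTPs arise. Fourth, conditioned on the pairing, each transmitter draws its power uniformly from $[0,p_{max}]$, so the probability that each $p_{ij}$ lies within a window of width $\alpha_3$ around $p^*_{ij}$ is at least $(\alpha_3/p_{max})^{|T^*|}$. I would then use continuity of the SINR rate function to convert uniform closeness of powers into uniform closeness of the weighted rates, giving the $(1-\alpha_1)$ ratio on $\sum_{ij}\Delta_{ij}r_{ij}$. Assembling the four conditional bounds and absorbing the polynomial-in-$N$ and $B$ constants yields $\beta_1=(1-\beta_3)\bigl(\frac{\epsilon}{2(1-\alpha_3^2)N^{3.5}B^2}\bigr)^N$.

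The main obstacle will be Step 4. Because each power $p_{kl}$ appears in the interference denominator of every other link's SINR, perturbations in one coordinate propagate across all links simultaneously, so one cannot simply Lipschitz-bound each rate independently. I expect to handle this with a second-order expansion of $\log_2(1+x)$ around the optimal SINR, using the boundedness of $\mathcal{P}$ to dominate higher-order terms; the resulting error in the weighted sum scales like $\alpha_3^2$, which is precisely why the factor $(1-\alpha_3^2)$ surfaces in the statement. A subsidiary check, that the acceptance rule $\tilde M\geq 0$ holds on the near-optimal event so that Algorithm 1 actually retains the freshly sampled schedule, follows from a second application of Lemma \ref{Lemmatwo} to the gossip estimate of $\sum_{ij}M_{ij}$ together with the definition of $R_{ij}(t)$.
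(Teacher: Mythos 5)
Your decomposition (gossip accuracy contributing $(1-\beta_3)$, then role selection, receiver pairing, and power sampling, closed off by a continuity argument) is exactly the paper's route, and your first three steps match its proof: the paper bounds $\mathbb{P}(\textrm{link } ab\mid\mathcal{G})\geq\frac{u_a}{U^*}\frac{1}{N}\bigl(1-\frac{u_b}{U^*}\bigr)$ and, using $U^*\in[(1-\alpha_3)U,(1+\alpha_3)U]$, $U-u_b\geq u_a\geq 1$, $U\leq NB$ and $\alpha_3<\frac{1}{2NB}$, obtains $\frac{1}{2(1-\alpha_3^2)N^3B^2}$ per link and raises it to the power $N$ over the at most $N$ links of the optimal configuration.

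The genuine problem is your Step 4 and the ``main obstacle'' discussion attached to it. In the statement, $\alpha_3$ is the gossip accuracy parameter, not a power window width: the factor $(1-\alpha_3^2)$ arises algebraically from $(1+\alpha_3)^{-1}(1-\alpha_3)^{-1}$ when $U^*$ is replaced by $U$ in $\frac{u_a}{U^*}\bigl(1-\frac{u_b}{U^*}\bigr)$, and has nothing to do with a second-order expansion of the SINR rate function. The power window is $\epsilon$, and it is chosen, for the given $\alpha_1$, by plain joint continuity of the finite sum $\sum_{ij}\Delta_{ij}r_{ij}(p)$ in the whole power vector $p$ for a fixed link configuration; this yields the $\bigl(\frac{\epsilon}{N^{0.5}}\bigr)^N$ factor (the paper cites Lemma 4 of \cite{Lee09}), which is where the $\epsilon^N$ in the numerator and the extra $N^{0.5}$ in $N^{3.5}$ come from. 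With a window of width $\alpha_3$ you would instead produce $\alpha_3^N$ and could not recover the stated $\beta_1$; and the interference-coupling worry is moot, since only continuity of the aggregate weighted sum is needed, not per-link Lipschitz control. Your closing check that the acceptance rule $\tilde M\geq 0$ retains the freshly sampled schedule is a sensible addition (the paper leaves that issue to Lemma \ref{Lemmafour}), but it plays no role in producing the constants in $\beta_1$.
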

\begin{proof}
	In every slot, the probability of a node being a transmitter is $u_i/U^*$, where $u_i=\min(q_i,B)$ and $U^*$ is the estimate of $U=\sum_{j \in V} u_j$ obtained by gossiping. Since each $u_i$ is less than or equal to $B$, their sum, $U$, cannot exceed $NB$.\\	
	\indent Pick $\alpha_3\in\bigl(0,\frac{1}{2NB}\bigr)$, and  $\beta_3\in(0,1)$. It follows from Lemma \ref{Lemmatwo} that using Algorithm 2 for gossiping, and running for $O(\log(n{\beta_3}^{-1}{\alpha_3}^{-1})/{\alpha_3}^2)$ iterations, returns a value $U^*\in[(1-\alpha_3)U,(1+\alpha_3)U]$ with probability greater than or equal to $1-\beta_3$. We assume that the gossiping algorithm runs for this sufficient number of iterations. Conditioned on this event(which we will call $\mathcal{G}$), we have the probability of selecting any link $ab$, independent of other links, given by	
	\begin{align*}
		\mathbb{P}(\textrm{link $ab|\mathcal{G}$}) & \geq \mathbb{P}(\textrm{$a$ is txr$|\mathcal{G}$})\dfrac{\mathbb{P}(\textrm{$b$ is a rxr $|\mathcal{G}$})}{\textrm{(no: of neighbours of $a$)}}\\
		& \geq\dfrac{u_a}{U^*}\dfrac{1}{N}\biggl(1-\dfrac{u_b}{U^*}\biggr).
	\end{align*}
	Since $\mathcal{G}$ implies that $(1-\alpha_3)U\leq U^*\leq(1+\alpha_3)U$, we have
	\begin{align*}
		\mathbb{P}(\textrm{link $ab|\mathcal{G}$})  & \geq \dfrac{u_a}{NU(1+\alpha_3)}\biggl(1-\dfrac{u_b}{(1-\alpha_3)U}\biggr)\\
		&=\dfrac{u_a}{NU(1-\alpha_3^2)}\biggl(\dfrac{U-u_b}{U}-\alpha_3\biggr).
	\end{align*}	
	Since $U-u_b=\sum_{j \in V, j \neq b} u_j \geq u_a$, and $U \leq NB$,we have:	
	\begin{align*}
		\mathbb{P}(\textrm{link $ab|\mathcal{G}$})  & \geq \dfrac{u_a}{N^2B(1-\alpha_3^2)}\biggl(\dfrac{u_a}{NB}-\alpha_3\biggr)\\
		& \geq \dfrac{1}{N^2B(1-\alpha_3^2)}\biggl(\dfrac{1}{NB}-\alpha_3\biggr),
	\end{align*}	
	\noindent where we have assumed $q_a \geq 1$, since any node having total queue length equal to zero can be removed from the set of transmitters, without affecting the system's performance. 
	With $B$ being a large positive integer, $u_a=\min(q_a,B) \geq 1$.  Since we have chosen $\alpha_3\in(0,\dfrac{1}{2NB})$, we find that:	
	\begin{align*}
		\mathbb{P}(\textrm{link $ab|\mathcal{G}$})  \geq \dfrac{1}{2(1-\alpha_3^2){N^3}{B^2}}.
	\end{align*}
	Since the number of transmitter-receiver pairs (links) possible under our assumptions is less than $N$, the probability of choosing any particular configuration of links is bounded from below by $\bigl(\frac{1}{2(1-\alpha_3^2){N^3}{B^2}}\bigr)^N$. In particular, the probability of chosing the optimal link configuration is bounded below by this value. Since the power vector is chosen independent of the links, and is chosen uniformly randomly over the range $[0,p_{max}]^N$, the probability that the power vector is in an $\epsilon$ radius around the optimal power vector is bounded below by $(\frac{\epsilon}{N^{0.5}})^N$, assuming $p_{max}=1$ (See Lemma 4 of \cite{Lee09} for details).\\
	\indent Define the event
	\begin{align*}
	\mathcal{A}:=\{\sum_{ij\in E}\Delta_{ij}r_{ij}(p(t))\geq(1-\alpha_1)\sum_{ij\in E}\Delta_{ij}r_{ij}^*(p^*(t))\}.
	\end{align*}	
	Since $\sum_{ij\in E}\Delta_{ij}r_{ij}(p(t))$ is a continuous function of $p(t)$ for a fixed link configuration, for any $\alpha_1 \in (0,1)$, there exists $\epsilon>0$ such that $\mathcal{A}$ is true for any $p(t)$ which satisfies the event $\{||p(t)-p^*(t)||<\epsilon\}$. 
	We have
	\begin{align*}
		\mathbb{P}[\mathcal{A}|\mathcal{G}]	&\geq \mathbb{P}[\mathcal{A}|\mathcal{G},\mathcal{S}^*]\mathbb{P}[\mathcal{S}^*|\mathcal{G}] \\
		&\geq \mathbb{P}[\{||p(t)-p^*(t)||<\epsilon\}|\mathcal{G},\mathcal{S}^*]\mathbb{P}[\mathcal{S}^*|\mathcal{G}] \\
		&\geq \biggl(\frac{\epsilon}{N^{0.5}}\biggl)^N \biggl(\dfrac{1}{2(1-\alpha_3^2){N^3}{B^2}}\biggr)^N, 
	\end{align*}
	where $\mathcal{S}^*$ is the event corresponding to choosing the optimal link configuration.
	We know that
	\begin{align*}
		\mathbb{P}[\mathcal{A}]\geq \mathbb{P}[\mathcal{A}|\mathcal{G}]\mathbb{P}[\mathcal{G}],
	\end{align*}
	and since $\mathbb{P}[\mathcal{G}]=1-\beta_3$,the result follows. 
\end{proof}
\begin{lem}
	\label{Lemmafour}
	Let $\alpha_2,\beta\in(0,1)$. Then, Algorithm 1 produces rates $r_{ij}(p(t))$, which satisfy, at every time $t$,
	\begin{align*}
		\mathbb{P}[\sum_{ij\in E}\Delta_{ij}r_{ij}(p(t)) &\geq(1-\alpha_2)\sum_{ij\in E}\Delta_{ij}r_{ij}(p(t-1))]\\
		&\geq1-\beta_2,
	\end{align*}
	where $\beta_2=\beta+\sigma(1-\beta)$.
\end{lem}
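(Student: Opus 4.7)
The plan is to condition on the coin flip $\chi$ (line 3 of Algorithm 1), which picks between two scheduling metrics. The target inequality only constrains the $\Delta_{ij}$-weighted rate sum, which coincides with the metric optimized when $\chi=0$; so the argument splits naturally into two cases. In $\chi=1$ slots I will simply pay the trivial failure bound of $1$, while in $\chi=0$ slots I will exploit the accept/reject structure on lines 18--19 together with the gossip accuracy guarantee from Lemma \ref{Lemmatwo}.

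For the case $\chi=0$, note that $M_{ij}=\Delta_{ij}R_{ij}(t)$, so the true sum $M=\sum_{ij}M_{ij}$ equals $\sum_{ij}\Delta_{ij}r_{ij}(\tilde p(t))-(1-\alpha_2)\sum_{ij}\Delta_{ij}r_{ij}(p(t-1))$. If the gossip estimate satisfies $\tilde M \geq 0$ then Algorithm 1 commits to the freshly drawn power $\tilde p(t)$, and, provided the multiplicative gossip guarantee preserves the sign of $M$, the required inequality is exactly $M\geq 0$, which holds. If instead $\tilde M < 0$ then the algorithm retains the previous slot's allocation, so $p(t)=p(t-1)$ and the required inequality reduces to $\sum_{ij}\Delta_{ij}r_{ij}(p(t-1))\geq(1-\alpha_2)\sum_{ij}\Delta_{ij}r_{ij}(p(t-1))$, which holds trivially for $\alpha_2\in(0,1)$. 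So in a $\chi=0$ slot the inequality can only fail if the gossip misleads the accept/reject decision; by running Algorithm 2 for enough iterations with suitably small $\delta,\epsilon$, Lemma \ref{Lemmatwo} bounds this event by $\beta$.

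Combining the two cases via the law of total probability, the failure probability is at most $\mathbb{P}[\chi=1]\cdot 1 + \mathbb{P}[\chi=0]\cdot\beta = \sigma + (1-\sigma)\beta = \beta+\sigma(1-\beta)$, giving the claimed $\beta_2$. The one delicate step I expect is justifying that the multiplicative gossip estimate preserves the sign of $M$, because Lemma \ref{Lemmatwo} is phrased for sums of non-negative numbers, whereas the $M_{ij}$ here can be negative. This will require either rerunning the gossip analysis for signed sums or gossiping a suitably shifted non-negative quantity from which $M$ can be recovered with a controlled sign error; once that technicality is in place, the remainder of the proof is just the probabilistic bookkeeping above.
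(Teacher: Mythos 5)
Your proposal is correct and follows essentially the same route as the paper: condition on $\chi$, give up entirely on $\chi=1$ slots, and on $\chi=0$ slots observe that accurate gossip of $M=\sum_{ij}\Delta_{ij}R_{ij}(t)$ makes the accept/reject decision safe in both branches, yielding the failure bound $\sigma+(1-\sigma)\beta=\beta+\sigma(1-\beta)$. The signed-sum technicality you flag is genuine --- Lemma~\ref{Lemmatwo}'s exponential-minimum construction needs non-negative summands, and the paper simply asserts the multiplicative guarantee for $M$ without addressing it --- so your treatment is, if anything, slightly more careful than the paper's own (which also compresses the two branches into the loose chain $\mathbb{P}[\mathcal{B}|\mathcal{H}]=\mathbb{P}[M\geq 0]=\mathbb{P}[\tilde M\geq 0]$).
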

\begin{proof}
	Let $\mathcal{H}$ be the event $\{\chi=0\}$. Conditioned on $\mathcal{H}$, at each transmitter $i$, we generate $L$ exponential random variables, with parameter equal to $M_{ij}=\Delta_{ij}[r_{ij}(\tilde p(t))-(1-\alpha_2)r_{ij}(p(t-1))]$. We need to estimate the sum $M=\sum_{ij}M_{ij}$, and if $M\geq0$, we go with the power allocation $\tilde p(t)$, else we use $p(t-1)$.\\	
	\indent Let $\alpha\in(0,1)$, and pick $L=3(\alpha)^{-2}\ln (4/\beta)$. Then, assuming the Gossiping Algorithm runs for $T=O(\log(N{\beta}^{-1}{\alpha}^{-1})/{\alpha}^2)$ iterations, it follows from Lemma \ref{Lemmatwo} that the estimate $\tilde M\in[(1-\alpha)M,(1+\alpha)M]$ with probability greater than or equal to $1-\beta$. Once these many iterations are complete, we have $\{M\geq0\}\iff\{ \tilde M\geq 0\}$.
	Define the event
	\begin{align*}
		\mathcal{B}:=\{\sum_{ij\in E}\Delta_{ij}r_{ij}(p(t)) &\geq(1-\alpha_2)\sum_{ij\in E}\Delta_{ij}r_{ij}(p(t-1))\}
	\end{align*}
	We can see that 
	\begin{align*}
		\mathbb{P}[\mathcal{B}|\mathcal{H}] &=\mathbb{P}[M\geq 0]=\mathbb{P}[\tilde M\geq 0]\geq (1-\beta).
	\end{align*}	
	Since $\mathbb{P}[\mathcal{B}]\geq \mathbb{P}[\mathcal{B}|\mathcal{H}]\mathbb{P}[\mathcal{H}] $ and $\mathbb{P}[\mathcal{H}]=1-\sigma$,the result follows.
\end{proof}
\newtheorem{thm}{Theorem}
\begin{thm}
	Algorithm 1 stabilizes the network for any arrival rate vector $\lambda \in \rho\Lambda$ where $\rho<1-(\alpha_1+(1-\alpha_1)\alpha_2)-2\sqrt{\dfrac{\beta_2}{\beta_1}}$.
\end{thm}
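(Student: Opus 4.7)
The plan is to apply Lemma 1 directly, since its conclusion is exactly the statement of this theorem. What remains is to verify that Algorithm 1 satisfies the two probabilistic hypotheses of Lemma 1, and these are supplied, respectively, by Lemma \ref{Lemmathree} and Lemma \ref{Lemmafour}.

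Concretely, I would first fix $\alpha_1, \alpha_2 \in (0,1)$, along with the auxiliary parameters $\alpha_3 \in \bigl(0, \tfrac{1}{2NB}\bigr)$, $\beta_3 \in (0,1)$, $\beta \in (0,1)$, and $\sigma = \mathbb{P}\{\chi=1\}$ that were introduced in the distributed scheme. Lemma \ref{Lemmathree} then supplies the first hypothesis of Lemma 1 with $\beta_1 = (1-\beta_3)\bigl(\tfrac{\epsilon}{2(1-\alpha_3^2)N^{3.5}B^2}\bigr)^N$, for a suitable $\epsilon>0$ obtained from $\alpha_1$ via the continuity-of-rate argument used there. Lemma \ref{Lemmafour} supplies the second hypothesis with $\beta_2 = \beta + \sigma(1-\beta)$. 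Substituting these two $(\beta_1,\beta_2)$ values into Lemma 1 yields stability of all queues for every $\lambda \in \rho\Lambda$ with $\rho < 1 - (\alpha_1 + (1-\alpha_1)\alpha_2) - 2\sqrt{\beta_2/\beta_1}$, which is precisely the assertion of the theorem.

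The only nontrivial issue, and the main point I would flag, is parameter feasibility: since $\beta_1$ decays exponentially in $N$ while $\beta_2$ is bounded below by $\sigma$, one must choose $\sigma$ and the gossip-failure parameter $\beta$ small enough (equivalently, run Algorithm 2 for sufficiently many iterations, as permitted by Lemma \ref{Lemmatwo}) so that $2\sqrt{\beta_2/\beta_1} < 1 - \alpha_1 - (1-\alpha_1)\alpha_2$. This guarantees $\rho > 0$ and hence that the stability region $\rho\Lambda$ has nonempty interior; once such a feasible parameter tuple is fixed, the conclusion of Theorem 1 follows by direct substitution into Lemma 1, with no further estimation required.
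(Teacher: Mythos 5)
Your proposal is correct and follows essentially the same route as the paper: the paper's proof of this theorem is simply an appeal to Lemma 1 with its two hypotheses verified by Lemma \ref{Lemmathree} and Lemma \ref{Lemmafour} (with Lemma \ref{Lemmatwo} underpinning the gossip estimates). Your additional remark about parameter feasibility --- that $\beta_1$ decays exponentially in $N$ while $\beta_2 \geq \sigma$, so $\rho > 0$ requires care in choosing $\sigma$ and $\beta$ --- is a worthwhile observation the paper leaves implicit, but it does not change the argument.
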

\begin{proof}
	Follows from Lemmas \ref{Lemmatwo}, \ref{Lemmathree} and \ref{Lemmafour}.
\end{proof}
Hence, we are guaranteed stability for all arrival rates in the region $\rho\Lambda$. Observe that the value of $\delta_1$ is decreasing as $B$ increases, the guarantee that one can give in terms of achievable capacity region decreases as a consequence. However, in simulations below we will see that increasing $B$, or letting it go to infinity, does not seriously hamper the performance of the algorithm in terms of its stability region. We also note that the value $\sigma$ captures a trade-off between QoS and stability.\\
\indent Comparing our algorithm with \cite{Lee09}, we can see that for the same values of $\alpha_1$ and $\alpha_2$, we can obtain a better $\rho$ by choosing corresponding values of $\sigma$ and $\beta$. This is borne out by the simulations where we compare the performance of the algorithms in terms of stability region. Also, via extensive simulations we have seen that the algorithm actually provides a much larger stability region than what is dictated by $\rho$. Thus, it is in fact a practically useful distributed algorithm which provides end-to-end QoS in a multihop wireless network.\\
\indent Even if the Gossip matrix is not complete, one may obtain the same result. However the number of timeslots in which one needs to operate the gossip algorithm will be much higher. Exact expressions may be calculated for these as well\cite{ShaBook}. \\
\indent One may observe that since the algorithm guarantees stability for all arrival rate vectors contained in $\rho\Lambda$, it naturally provides for rate guarantees for any flow that generates packets at a constant rate within this region.
\section {Simulation Results}
For the simulations, we  consider networks of 10, 15 and 20 nodes, with the nodes distributed randomly uniformly in a unit square. We  assume Rayleigh fading between the nodes, as well as that packet arrivals are i.i.d across slots with Poisson distribution. The rate function, as mentioned earlier, will be the SINR rate function. For all the simulations we will use $\sigma =0.999$ and $B =10^{5}$. While these values reduce the theoretical value of $\rho$, it is evident from the simulations that they enhance the performance.\\
\indent We first compare the stability region that our algorithm offers, and compare it to two distributed algorithms: Lee\cite{Lee09} and Distributed DRPC\cite{Nee05}. For a network of 20 nodes we see that our algorithm outperforms both the others in terms of stability, when the number of flows is five (Fig. \ref{fig2}), as well as when it is fifteen (Fig. \ref{fig3}). We plot the change in total queue length as arrival rate at all nodes is increased uniformly. From the figures it is clear that our algorithm offers a huge improvement as far as stability is concerned.
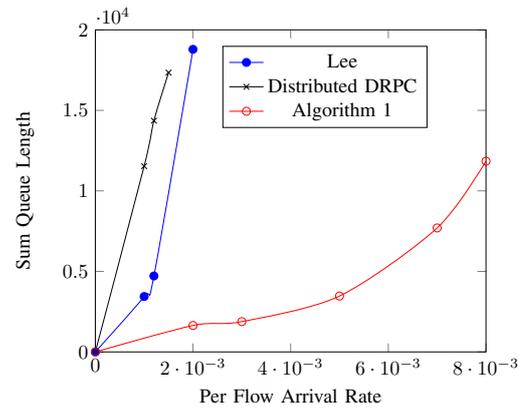
\begin{figure}
	\centering
	\begin{tikzpicture}[scale=0.75, transform shape]		
	\begin{axis}[
	xlabel=Per Flow Arrival Rate,
	ylabel=Sum Queue Length,
	xmax=0.008,
	xmin=0,
	ymax=20000,
	ymin=0,
	legend style = { at = {(0.85,0.95)}}
	]
	\addplot[smooth,mark=*,blue] plot coordinates {
		(0,0)
		(0.001,3439)
		(0.0012,4716)
		(0.002,18789)
	};
	\addlegendentry{Lee}
	
	\addplot[smooth,color=black,mark=x]
	plot coordinates {
		(0,0)
		(0.001,11537)
		(0.0012,14359)
		(0.0015,17344)
	};
	\addlegendentry{Distributed DRPC}
	\addplot[smooth,color=red,mark=o]
	plot coordinates {
		(0,0)
		(0.002,1643)
		(0.003,1881)
		(0.005,3469)
		(0.007,7698)
		(0.008,11844)
	};
	\addlegendentry{Algorithm 1}
	\end{axis}
	\end{tikzpicture}
	\caption{Stability Region for our algorithm for a network with 20 nodes and 5 flows}
	\label{fig2}
\end{figure}
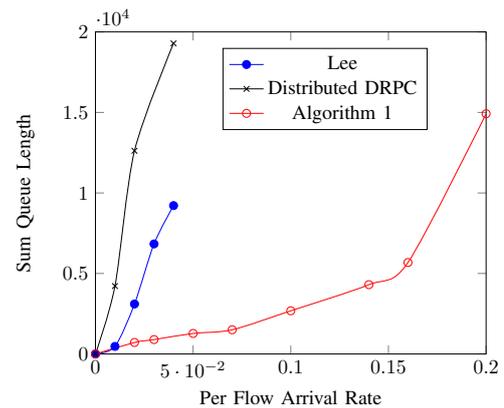
\begin{figure}
	\centering
	\begin{tikzpicture}[scale=0.75, transform shape]		
	\begin{axis}[
	xlabel=Per Flow Arrival Rate,
	ylabel=Sum Queue Length,
	xmax=0.2,
	xmin=0,
	ymax=20000,
	ymin=0,
	legend style = { at = {(0.85,0.95)}}
	]
	\addplot[smooth,mark=*,blue] plot coordinates {
		(0,0)
		(0.01,470)
		(0.02,3104)
		(0.03,6836)
		(0.04,9215)
	};
	\addlegendentry{Lee}
	
	\addplot[smooth,color=black,mark=x]
	plot coordinates {
		(0,0)
		(0.01,4219)
		(0.02,12617)
		(0.04,19289)
	};
	\addlegendentry{Distributed DRPC}
	\addplot[smooth,color=red,mark=o]
	plot coordinates {
		(0,0)
		(0.02,715)
		(0.03,895)
		(0.05,1275)
		(0.07,1503)
		(0.1,2681)
		(0.14,4305)
		(0.16,5682)
		(0.2,14918)
	};
	\addlegendentry{Algorithm 1}
	\end{axis}
	\end{tikzpicture}	
	\caption{Stability Region for our algorithm for a network with 20 nodes and 15 flows}
	\label{fig3}
\end{figure}\\
\indent The first QoS parameter that we will consider is mean delay guarantee. For such a flow $c$, at its destination node, the mean end-to-end delay is computed empirically, by averaging over all packets of that flow that arrive at the destination. If this value is greater than the mean delay required by the flow, the corresponding $\eta^c(t)$ is set to 1.  We present case studies of networks of 10 and 15 nodes, with the number of QoS flows being one or two. Each scenario is studied for a fixed value of the arrival rate vector, which is chosen within the capacity region of the network.\\
\indent Table \ref{table:One} gives the mean delay values of the QoS flow for two cases. Network 1 is a case of 10 nodes with 7 flows, of which one flow requires a mean delay guarantee. Network 2 is a case of 15 nodes with 10 flows of which one requires a mean delay guarantee. The value of the parameter  $\theta$ used for giving priority, is 10 in both cases. 
Table \ref{table:Three} is for 15 nodes with 7 flows, of which two flows require mean delay guarantees, and  $\theta=10$.\\
\indent From the simulations it is evident that the value of $\theta$ may be increased in order to gain a better performance. Also, in the case of multiple flows with QoS requirements, the flows are likely compete with each other as well, in order to have their share of the system resources. In Table \ref{table:Three}, both QoS flows are given the same priority (as indicated by $\theta$), one may also use different $\theta$ values corresponding to different flows. Due to the fact that the system is controlled in a distributed fashion, the number of  QoS demands it can support simultaneously may not be huge. One also observes that the mean delay cannot be brought down below a particular value. This in some sense is the limit of what the algorithm can achieve, for this particular form of the function $h$. This value is a function of the arrival rate vector.\\
\indent The next QoS parameter is hard deadline guarantee. In this case the QoS is specified by two values, a delay deadline $d$ and a dropping ratio $r$, and it is required that no more than $r$ fraction of the packets have a delay more than $d$. The value of $r$ is estimated empirically, and if this is greater than the required dropping ratio, the corresponding $\eta^c(t)$ is set to 1.\\
\indent Table \ref{table:Four} gives the delay performance of a 10 node network with 10 nodes and 8 flows, of which three are QoS flows: two have a mean delay requirement, and one has a hard deadline. To meet the hard deadline, the stability region has reduced. The hard deadline flow has to meet a delay deadline of 70. The mean delay flows have $h(x)=10x^2$ and the hard deadline flow has $h(x)=20x^2$. Note that the hard deadline is achieved for 94.9\%, 97\% and 98\% of the packets, as required, with little impact on the mean delay performance. Note that running the algorithm of \cite{Lee09} results in a mean delay of 127 and 104 respectively, for flow 1 and 2 respectively; and the drop ratio for flow 3 is $52.7\%$ (this is the fraction of packets that violates the end to end hard deadline). We see from simulations that we need to set the $\theta$ value for flows having hard deadline to be at least twice that for mean delay constrained flows.
\begin{table}[h]
	\centering
	\caption{One flow with  mean delay requirement}
	\label{table:One}
	\begin{tabular}{|p{1.2cm}|p{1.2cm}|p{1.2cm}|p{1.2cm}|}
		\hline 
		\multicolumn{2}{|c}{Network 1}\ & \multicolumn{2}{|c|}{Network 2}\\
		\hline
		Delay Target (slots) & Delay Achieved (slots) & Delay Target (slots) & Delay Achieved (slots)\\
		\hline
		200 & 202 & 350 & 353\\
		\hline
		180 & 181 & 300 & 292\\
		\hline
		150 & 152 & 230 & 236\\
		\hline
		120 & 121 & 200 & 212\\
		\hline
		100 & 100 & 180 & 193\\
		\hline
		80 & 83 & 150 & 160\\
		\hline
		60 & 61 & 120 & 149\\
		\hline
	\end{tabular}	
\end{table}

\vspace*{-4mm}
\begin{table}[h]
	\centering
		\caption{Two flows with mean delay requirement}
		\label{table:Three}
	\begin{tabular}{|p{1.2cm}|p{1.2cm}|p{1.2cm}|p{1.2cm}|}
		\hline 
		\multicolumn{2}{|c}{Flow 1}\ & \multicolumn{2}{|c|}{Flow 2}\\
		\hline
		Delay Target (slots) & Delay Achieved (slots) & Delay Target (slots) & Delay Achieved (slots)\\
		\hline
		300 & 308 & 300 & 330\\
		\hline
		250 & 248 & 250 & 256\\
		\hline
		200 & 210 & 250 & 270\\
		\hline
		150 & 169 & 200 & 202\\
		\hline
		180 & 182 & 180 & 189\\
		\hline
		160 & 185  &160 & 179\\
		\hline
	\end{tabular}
\end{table}
\begin{table}[h]
	\centering
		\caption{Two mean Delays and one hard deadline}
		\label{table:Four}
	\begin{tabular}{|p{0.6cm}|p{0.6cm}|p{0.5cm}|p{0.6cm}|p{0.6cm}|p{0.5cm}|p{0.6cm}|p{0.6cm}|p{0.5cm}|}
		\hline
		\multicolumn{3}{|c}{Flow 1}\ & \multicolumn{3}{|c|}{Flow 2} & \multicolumn{3}{ c|}{Flow 3}\\		
		\hline
		Delay Target (slots) & Delay Achi- eved (slots) & Mean Delay in \cite{Lee09} (slots) & Delay Target (slots) & Delay Achi- eved (slots) & Mean Delay in \cite{Lee09} (slots) & Drop ratio Target &  Drop ratio Achi- eved & Drop Ratio in \cite{Lee09}\\
		\hline
		30 & 31 & \multicolumn{1}{c|}{\multirow{2}{*}{ }} & 40 & 41 & \multicolumn{1}{c|}{\multirow{2}{*}{}} & 5\% & 5.1\% & \multicolumn{1}{c|}{\multirow{2}{*}{}}\\
		\hhline{--~--~--~}
		30 & 31 & 127   & 40 & 41 & 104  & 3\% & 3\% & 52.7\% \\
		\hhline{--~--~--~}
		30 & 31 &   & 40 & 40 &   & 2\% & 2\% &  \\
		\hline
	\end{tabular}
\end{table}
\vspace*{-2mm}
\section{Conclusions and Future Work}
Distributed control of wireless networks with QoS is a challenging problem. We have obtained a distributed algorithm for routing, power control and scheduling of links using  queue length dependent cross-layer schemes under the SINR model of interference, while simultaneously providing  mean delay guarantees and hard deadline guarantees. Simulations demonstrate the improvement the scheme provides over existing approaches, as well as its ability to provide delays close to what is demanded by the users. The stability region expressions, as well as simulations point to how asking for more QoS effectively diminishes the amount of traffic the system can support. Theoretical bounds on how much QoS can be asked of a system is a question that can be explored in the future. Another question is the characterization of the upper limits of such distributed algorithms vis-\`a-vis centralized algorithms.

\bibliography{survey}
\bibliographystyle{IEEEtran}

\end{document}